\tikzset{->-/.style={decoration={
  markings,
  mark=at position #1 with {\arrow{>}}},postaction={decorate}}}
\tikzset{->-/.default=0.5}
\pgfplotsset{compat=1.10}
\numberwithin{equation}{section}
\newtheorem{theorem}{Theorem}[section]
\newtheorem*{theorem*}{Theorem}
\newtheorem*{proposition*}{Proposition}
\newtheorem{lemma}[theorem]{Lemma}
\theoremstyle{definition}
\newtheorem{definition}[theorem]{Definition}
\newtheorem{remark}[theorem]{Remark}
\newtheorem{interpretation}[theorem]{Physical interpretation}
\newtheorem{example}[theorem]{Example}
\newcommand{\Tr}{\mathrm{Tr}}
\newcommand{\BB}{\mathrm{B}}
\newcommand{\spec}{\mathrm{spec}}
\newcommand{\Roe}{\mathrm{Roe}}
\newcommand{\C}{\mathbb{C}}
\newcommand{\R}{\mathbb{R}}
\newcommand{\Z}{\mathbb{Z}}
\DeclareMathOperator{\Exp}{\mathrm{Exp}}
\begin{document}

\title{Coarse geometry and its applications in solid state physics}
\collaboration{Entry for the {\itshape{Encyclopedia of Mathematical Physics}} 2nd edition}
\author{Matthias Ludewig}
\affiliation{Universit\"at Regensburg}

\maketitle

Topological insulators (or topological phases), are certain materials or meta-materials which have a certain topological non-triviality in their mathematical description.
For a general introduction, we refer to the broader physics literature and the article by Bernevig in this encyclopedia.
The use of K-theory in the classification of topological phases goes back to Kitaev \cite{Kitaev} and is now fairly standard in the mathematical physics literature \cite{FreedMoore,ThiangKtheory}.
Here K-theory typically refers to topological K-theory of the Bloch bundle on the Brillouin zone; see \S\ref{SectionBlochTheory}.
However, as observed by Bellissard around 40 years ago \cite{Belissard,BelissardVanElstSchuba}, non-commutative $C^*$-algebras of observables are needed in order to classify disordered systems, which naturally leads to the use of operator algebra K-theory.\footnote{The $C^*$-algebras suggested by Belissard are certain crossed product algebras (also used in the modern treatment \cite{ProdanSchulzBaldes}).}
It has been realized fairly recently that \emph{Roe algebras}, which come from the mathematical subject of \emph{coarse geometry}, are a particularly good choice here, as they are physically well-motivated and on the other hand allow simple proofs of many features of topological insulators \cite{KubotaControlled, EwertMeyer,LudewigThiangGaplessness,KubotaLudewigThiang,LudewigThiangLocalization}. 
In this article, we give an overview over recent developments in this direction.

One physical feature of many topological insulators is that they are insulating in their interior (the \emph{bulk}), but extremely well-conducting on the boundary.
Mathematically, this corresponds to filling of the bulk Hamiltonian's spectral gap when introducing boundary (the new spectrum corresponding to certain boundary-localized states \cite[\S2.4.3]{ProdanSchulzBaldes}).
Moreover, in two dimensional topological phases, the mathematical models predict certain quantized edge currents \cite{Halperin,KellendonkSchuba,ThiangEdgeFollowing}, which have also been observed in experiments \cite{Exciton-polariton,PNAS,MNHTI}.
These phenomenona are robust against all kinds of perturbations.
In this article, we will focus on the above effects, which can be well described using coarse geometry.

In this article, we consider the case of no internal symmetries (i.e., topological phases of type A in the Cartan classification, see \cite{AltlandZirnbauer}), which corresponds to the use of the usual complex Roe algebras.
To encorporate time reversal, chiral and/or particle-hole symmetry, one uses real versions of Roe algebras and their KO-theory, as discussed in \cite{EwertMeyer}.
Many of the methods discussed below carry over to this setup without changes, but some aspects of the theory need a different treatment due to the presence of torsion in the KO-groups.
These more refined discussions are beyond the scope of this expository article.

\section{General mathematical setup}
\label{SectionSetup}

In this section, we introduce the general non-commutative geometry setup for the description of topological phases used in this article.
The choice of a coarse geometric algebra of observables will be discussed later, in \S\ref{SectionRoeAlgebras}. 

Consider a Hilbert space $\mathscr{H}$ with a \emph{Hamiltonian} $H$, be which we a self-adjoint (possibly unbounded) operator on $\mathscr{H}$.
For simplicity, we also assume also that $H$ is positive.

\begin{definition}
\label{DefinitionInsulator}
We say that $H$ is an \emph{insulator at energy level} $E \in \R$, if $E$ is not contained in the spectrum $\spec(H)$ of $H$.
We write $p_E$ for the orthogonal projection of $H$ onto the spectral subspace of $\mathscr{H}$ corresponding to the spectrum of $H$ below $E$.
\end{definition}

In physics, this value $E$ is typically called the \emph{Fermi energy} of the physical system whose dynamics are described by $H$ and we will refer to $p_E$ as the \emph{Fermi projection}.

Let now $\mathcal{A} \subset \BB(\mathscr{H})$ be a $C^*$-algebra of bounded operators on $\mathscr{H}$, which we interpret as the  algebra of observables of the system.
We require that $H$ is compatible with $\mathcal{A}$ in the sense that for each compactly supported continuous function $f \in C_c(\R)$, the operator $f(H)$ (formed using functional calculus) is contained in $\mathcal{A}$. 
Observe that if $H$ is bounded, this is equivalent to requiring $H \in \mathcal{A}$.
In general, this assumption ensures that if $H$ is insulating at energy level $E$, then the spectral projection $p_E$ is contained in $\mathcal{A}$; indeed, since $H$ has a spectral gap at $E$ and is positive, we can write $p_E = f(H)$ for $f \in C_c(\R)$.\footnote{While we always have $p_E = f(H)$ for $f$ the (compactly supported) indicator function of $[0, E]$, the point here is that the spectral gap allows to choose $f$ to be continuous. This is important as the observable algebra $\mathcal{A}$ typically does not contain the operators $f(H)$ for non-continuous $f$.}

\begin{definition}
\label{DefinitionTopologicalInsulator}
We say that $H$ is a \emph{topological insulator} if the corresponding class $[p_E] \in K_0(\mathcal{A})$ in $C^*$-algebra K-theory is non-trivial.
\end{definition}

We recall here that the K-theory group $K_0(\mathcal{A})$ consists of formal differences of homotopy classes of projections in matrix algebras over the unitalization of $\mathcal{A}$; see, e.g., \cite{WeggeOlsen} or \cite{Blackadar} for an introduction to the K-theory of operator algebras.

\begin{remark}
More precisely, Definition~\ref{DefinitionTopologicalInsulator} is actually that of a \emph{type A topological insulator} (i.e., a topological insulator with \emph{no} internal symmetries) and has to be modified in the presence of internal symmetries.
In particular, in the presence of time reversal symmetry, the $C^*$-algebra $\mathscr{A}$ will be real and one has to use KO-theory instead of K-theory.
While these groups can be generally described by unitaries and projections with certain additional symmetries \cite{BoersemaLoring}, it may then be convenient to use other descriptions of K-theory, e.g.\ that of van Daele, as advocated for in \cite{JosephMeyer}.
\end{remark}

\begin{example}
In applications, the Hilbert space $\mathscr{H}$ is typically either the space $L^2(\R^d)$ of square-integrable functions on $\R^d$ or -- in the so-called \emph{tight-binding approximation} -- its discretization $\ell^2(\Z^d) \otimes \C^n$.
In the first case, typical examples of Hamiltonians are magnetic Schr\"odinger operators, 
\begin{equation}
\label{MagneticSchroedinger}
H = (d - iA)^*(d-iA),
\end{equation}
where $A \in \Omega^1(\R^d)$ is a magnetic potential.
In the second case, the Hamiltonian $H$ is typically bounded, given by an infinite matrix $(H_{ij})_{i,j \in \Z^d}$, which typically has the \emph{finite propagation} property that there exists $r >0$ such that $H_{ij} = 0$ whenever $|i - j| \geq r$.
In other words, only near sites interact.
Schr\"odinger operators with bounded potential can also be considered; see, \cite[Prop.~2.5]{EwertMeyer}.
\end{example}

While the Hilbert space and the Hamiltonian are typically prescribed by the experiment, the algebra $\mathcal{A}$ is a choice one has to make.
This choice should be made according to the properties of the Hamiltonian and depends on the physical phenoma one seeks to describe.
In the coarse geometric approach to topological insulators, $\mathcal{A}$ is typically taken to be the \emph{Roe algebra} of the underlying space; see \S\ref{SectionRoeAlgebras}.

\section{Connection to Bloch theory}
\label{SectionBlochTheory}

In this section, we compare the approach from \S\ref{SectionSetup} to the classical approach to topological insulators using Bloch theory.
Here the Hamiltonian $H$ is assumed to be periodic, in the sense that it commutes with lattice translations.

Let us for simplicity use a the discrete model, where $\mathscr{H} = \ell^2(\Z^d)\otimes \C^n$.
Then by Fourier transform, the Hamiltonian $H$ corresponds to a smooth matrix-valued function $\hat{H} : T^d \to M_n(\C)$ on the torus $T^d$ (the so-called \emph{Brillouin zone}).
The spectrum of $H$ is just the union of all $\spec(\hat{H}(\mathbf{k}))$, $\mathbf{k} \in T^d$, hence a spectral gap of $H$ at $E \in \R$ implies that none of the matrices $H(\mathbf{k})$ has eigenvalue $E$.
Therefore the vector spaces
\begin{equation*}
  V^E_{\mathbf{k}} = \bigoplus_{\lambda \leq E} \mathrm{Eig}(H(\mathbf{k}), \lambda), \qquad \mathbf{k} \in T^d,
\end{equation*}
glue together to a smooth vector bundle $V^E$ over $T^d$ (the \emph{Bloch bundle} of $H$), and topological insulators are those having non-trivial Bloch bundle.
While this non-triviality can be measured using topological invariants such as characteristic numbers, the K-theory class $[V^E] \in K^0(T^d)$ should be considered as the more fundamental invariant.\footnote{Recall that two vector bundles $V$ and $W$ define the same class in $K_0(X)$ if they become isomorphic after taking the direct sum with a trivial vector bundle on both sides. 
This stabilization corresponds in physics to adding several trivial ``bands''.}

These classical considerations correspond to taking $\mathcal{A} = C^*_r(\Z^d) \otimes M_n(\C)$ in the mathematical setup of \S\ref{SectionSetup}, where $C^*_r(\Z^d)$ is the (reduced) group $C^*$-algebra of $\Z^d$ and $\mathcal{A}$ acts on $\mathscr{H} = \ell^2(\Z^d) \otimes \C^n$ via the left regular representation and matrix multiplication.
Namely, it is well known that the topological K-theory group $K^0(T^d)$ is isomorphic to the $C^*$-algebra K-theory group of the algebra $C(T^d)$ of continuous functions on the torus. 
This $C^*$-algebra in turn is (via Fourier transform), isomorphic to $C^*_r(\Z^d)$.
The corresponding isomorphism 
\begin{equation*}
K^0(T^d) \cong K_0(C(T^d)) \cong  K_0(C^*_r(\Z^d))
\end{equation*}
 sends the class $[V^E]$ of the Bloch bundle to the class defined by the Fermi projection $p_E$.

As pointed out by Belissard \cite{Belissard}, the assumption of periodicity on $H$ is only an idealization, which can never be satisfied in physical practice. 
So even if the model Hamiltonian of interest is periodic, it is crucial for $\mathcal{A}$ to contain sufficiently small non-periodic perturbations of $H$.
This persistence is usually taken for granted in the physics literature. 
In reality, only certain "strong" invariants from periodic models can survive non-periodic perturbations, and we shall explain how Roe algebras provide the mathematical basis to understand this.

\section{Roe algebras}
\label{SectionRoeAlgebras}

In this section, we introduce Roe algebras, which are the observable algebra of choice for this article.
General references for this section are \cite[\S6.3]{HigsonRoe}, \cite[\S5]{WillettYu} or \cite[\S3]{RoeIndexTheory}.

Let $(X, d)$ be a metric space.
By an \emph{$X$-module}, we mean a Hilbert space $\mathscr{H}$ together with an action of the $C^*$-algebra $C_0(X)$ of continuous functions on $X$ vanishing at infinity.
We say that $\mathscr{H}$ is \emph{ample} if $f \in C_0(X)$ acts as a compact operator only when $f =0$.

\begin{example}
In the case that $X$ is $\R^d$, or more generally, a Riemannian manifold, the reader should think of the $X$-module  $\mathscr{H} = L^2(\R^d)$, with the usual action of $C_0(X)$ by multiplication.
On the other hand, for $X$ a discrete metric space, the space $\mathscr{H} = \ell^2(\Z^d)$ is \emph{not} ample.
However, one may tensor with some auxiliary ``coefficient'' Hilbert space to obtain an ample $X$-module.
In physics, this amounts to increasing internal degrees of freedom.
\end{example}

By the spectral theorem for commutative $C^*$-algebras \cite[Thm.~12.22]{Rudin}, the action of $C_0(X)$ extends to an action of the algebra of bounded Borel functions on $\mathscr{H}$ (see also \cite[Proposition 1.6.11]{WillettYu}).
In particular, for any Borel subset of $W \subset X$ the indicator function $\chi_W$ acts as a projection operator on $\mathscr{H}$.
Throughout, for $f$ a bounded Borel function on $X$, we also denote by $f$ the operator on $\mathscr{H}$ provided by the action.

An operator $T \in \BB(\mathscr{H})$ is \emph{locally compact}, if $\chi_W T$ and $T\chi_W$ are compact for all bounded subsets $W \subseteq X$. 
$T$ is called of \emph{finite propagation} if there exists $r\geq 0$ such that $\chi_V T \chi_W = 0$ whenever $d(V, W) > r$.

\begin{definition}
Let $\mathscr{H}$ be an ample $X$-module.
The \emph{Roe algebra} $C^*_{\Roe}(X)$ is the norm-closure in $\BB(\mathscr{H})$ of the subalgebra of all locally compact, finite propagation operators on $\mathscr{H}$.
\end{definition}

We do not mention the $X$-module $\mathscr{H}$ in notation as the isomorphism type of $C^*_{\Roe}(X)$ is independent of the choice of the ample $X$-module $\mathcal{H}$.
Precisely, if $\mathcal{H}_1$ and $\mathcal{H}_2$ are two different ample $X$-modules, then the isomorphism of Roe algebras is implemented by a unitary transformation $U : \mathcal{H}_1 \to \mathcal{H}_2$; see \cite[Prop.~6.3.12]{HigsonRoe} or \cite[Theorem 2.1]{EwertMeyer}.

\begin{example}
The Roe algebra depends only on the large scale structure of the space $X$.
In particular, for any bounded space $X$, the finite propagation property is irrelevant, hence in this case, $C^*_{\Roe}(X) = K(\mathscr{H})$, the algebra of compact operators on $\mathscr{H}$.
\end{example}

A subset $Y \subset X$ is called \emph{coarsely dense} if there exists $r>0$ such that $B_r(Y) = X$; here $B_r(Y)$ denotes the  set of $x \in X$ such that $d(x, y) \leq r$ for some $y \in Y$.
The Roe algebra of $X$ is isomorphic to the Roe algebra of any coarsely dense subset \cite[Thm.~2.7]{EwertMeyer}, where the isomorphism is again implemented by a unitary isomorphism of the underlying Hilbert spaces.
For example the inclusion $\Z^d \hookrightarrow \R^d$ is a coarse equivalence, which allows easy comparison of discretized and continuum models.
In particular, we have
\begin{equation}
\label{KtheoryRoeAlgebraRn}
  K_0(C^*_{\Roe}(X)) = \begin{cases} \Z & \text{if $d$ is even} \\ 0 & \text{if $d$ is odd}
  \end{cases}
\end{equation}
for $X$ either $\Z^d$ or $\R^d$.
More generally, the Roe algebra of two \emph{coarsely equivalent} spaces is isomorphic; see Definition 2.8 and Thm.~2.9 of \cite{EwertMeyer}.

%Let $X$ and $Y$ be metric spaces.
%A map $f: X \to Y$ is \emph{coarse} if for any $r>0$, there exists $s>0$ such that $d(f(x), f(y)) \leq s$ whenever $d(x, y) \leq r$.
%$f$ is called \emph{bornological} if $f^{-1}(B) \subseteq X$ is bounded for each bounded set $B \subseteq Y$.
%A coarse map $f: X \to Y$ is a \emph{coarse equivalence} if there exists $g: Y \to X$ such that $f \circ g$ and $g \circ f$ are close to the identity (of $Y$, respectively $X$).
%This means that there exists $r>0$ such that $d((f\circ g)(x), x) \geq r$ for each $x \in X$, and similarly for $g \circ f$.
%If $X$ and $Y$ are coarsely equivalent (meaning there exists a coarse equivalence between them), then their Roe algebras are isomorphic \cite[Thm.~2.9.]{EwertMeyer}

\begin{remark}
One can define the Roe algebra for the more general class of \emph{bornological coarse spaces}.
We refer to the entry on ``Coarse geometry'' in this encyclopedia for a detailed account; see also \cite{BunkeEngelHomotopy}.
\end{remark}

%On the other hand, any compact metric space $X$ is coarsely equivalent to a point, hence $C^*_{\Roe}(X)$ is just the algebra of compact operators on $\mathscr{H}$.

\medskip

We now discuss localized Roe algebras, which are subalgebras of the Roe algebras supported near a subset.
Here it is convenient to use the following notion, which was introduced in \cite[Definition 3.2]{BunkeEngelHomotopy}.

\begin{definition}
A \emph{big family} in $X$ is a collection $\mathcal{Y}$ of subsets of $X$ such that for $Y_1, Y_2 \in \mathcal{Y}$ and $r>0$, there exists $Y^\prime \in \mathcal{Y}$ with $B_r(Y_1) \cup B_r(Y_2) \subseteq Y^\prime$.
\end{definition}

In particular, for any subset $Y \subset X$, we have the big family 
\begin{equation}
\label{SingletonBigFamily}
\{Y\} := \{Z \subset X \mid \exists r>0 : Z \subseteq B_r(Y)\},
\end{equation}
comprising all uniform thickenings of $Y$.
Given a big family $\mathcal{Y}$ in $X$, we say that $T \in C^*_{\Roe}(X)$ is \emph{supported in} $\mathcal{Y}$, if there exists $Y \in \mathcal{Y}$ such that $fT = Tf = 0$ for all $f$ with support in $X \setminus Y$.

\begin{definition}
Given a big family $\mathcal{Y}$, the \emph{Roe algebra localized at $\mathcal{Y}$}, denoted by $C^*_{\Roe}(\mathcal{Y})$, is the subalgebra of $C^*_{\Roe}(X)$ obtained by taking the closure of all operators supported in $\mathcal{Y}$.
\end{definition}

For any big family $\mathcal{Y}$, the localized Roe algebra $C^*_{\Roe}(\mathcal{Y})$ is a two-sided ideal in $C^*_{\Roe}(X)$.
For subsets $Y \subseteq X$, the K-theory of $C^*_{\Roe}(\{Y\})$ coincides with that of $C^*_{\Roe}(Y)$ (taken without reference to the ambient space $X$); see the proof of Thm.~9.2 in \cite{RoeIndexTheory}.
It is straightforward to see that the intersection 
\begin{equation*}
\mathcal{Y} \cap \mathcal{Y}^\prime := \{Y \cap Y^\prime \mid Y \in \mathcal{Y}, Y^\prime \in \mathcal{Y}^\prime\}
\end{equation*}
 of two big families $\mathcal{Y}$ and $\mathcal{Y}^\prime$ is always a big family and we have
\begin{equation}
\label{IntersectionOfFamilyAlgebras}
C^*_{\Roe}(\mathcal{Y}) \cap C^*_{\Roe}(\mathcal{Y}^\prime) = C^*_{\Roe}(\mathcal{Y}\cap \mathcal{Y}^\prime).
\end{equation}
%{\color{gray}
%Let $T$ be supported in $\mathcal{Y}$ and let on $\mathcal{Y}^\prime$. 
%Then there exists $Y \in \mathcal{Y}$ and $Y^\prime \in \mathcal{Y}^\prime$ such that $fT = Tf = 0$ whenever $f$ has support in $X \setminus Y$ or $X \setminus Y^\prime$.
%Hence the same is true if $f$ has support in $X \setminus (Y \cap Y^\prime)$ (this is clearer after using that the action extends to all Borel functions).
%Hence $T$ is supported in $\mathcal{Y} \cap \mathcal{Y}^\prime$.
%The converse is clear.
%}
%This is due to the fact that the defining property of a big family implies that the colimit in \eqref{DefinitionRoeFamily} is filtered. 
%Hence on the underlying set, it commutes with finite limits, and thus the same is true for the C*-algebra completions (as these are obtained by first taking limits in Set and then completing).

\medskip

Finally, we discuss equivariant Roe algebras; see \cite[\S5.2]{WillettYu} for a general reference.
Suppose that a countable discrete group $\Gamma$ acts properly on a metric space $X$ by isometries and let $\mathscr{H}$ be an $X$-module with a compatible unitary representation $U: \Gamma \to \mathrm{U}(\mathscr{H})$, meaning that $\gamma^*f = U_{\gamma^{-1}} f U_\gamma$ for all $f \in C_0(X)$ and all $\gamma \in \Gamma$. 

\begin{definition}
The \emph{equivariant Roe algebra} $C^*_{\Roe}(X)^\Gamma$, is the norm closure of the algebra of all $\Gamma$-equivariant locally compact operators of finite propagation on $\mathscr{H}$.
\end{definition}

If the group action is such that the quotient $X / \Gamma$ is compact, then the choice of any bounded fundamental domain $F \subset X$ for the $\Gamma$-action yields an isomorphism
\begin{equation*}
C^*_{\Roe}(X) \cong C^*_r(\Gamma) \otimes \mathrm{K}(\mathscr{H}|_{F}),
\end{equation*}
where $\mathscr{H}|_{F} = \chi_F\mathscr{H}$ denotes the restriction of the $X$-module to $F$ and $\mathrm{K}(\mathscr{H}|_{F})$ denotes the corresponding algebra of compact operators \cite[\S5.2]{WillettYu}.
As tensoring with the latter does not change K-theory, the inclusion $C^*_{\Roe}(X)^\Gamma \hookrightarrow C^*_{\Roe}(X)$ induces a map
\begin{equation}
\label{MapToRoeAlgebra}
  K_0(C^*_r(\Gamma)) \longrightarrow K_0(C^*_{\Roe}(X)).
\end{equation}
In the example $X = \R^d$ or $\Z^d$ and $\Gamma = \Z^d$ with $d = 2m$ even, this results in a map
\begin{equation}
\label{KTheoryMapToRoeAlgebra}
   \underbrace{K^0(T^d)}_{\Z^{2^{d-1}}} \cong K_0(C^*_r(\Z^d)) \longrightarrow \underbrace{K_0(C^*_{\Roe}(X))}_\Z
\end{equation}
which sends the complex vector bundle with $m$-th Chern class equal to the generator of $H^d(T^d, \Z) \cong \Z$ to the generator of the right hand side.

As a non-trivial class $[p_E]$ on the left of \eqref{KTheoryMapToRoeAlgebra} may become trivial on the right, we observe that not all periodically-topological phenomena are tractable using Roe algebras.
However, when $[p_E]$ defines a non-trivial element in the K-theory of the Roe algebra (which happens, e.g., for \emph{strong} topological insulators, in physics terminology), the coarse geometric approach is viable.

\section{Example Hamiltonians}
\label{SectionExamples}

In this section, we discuss examples of topological insulators (see Definition~\ref{DefinitionTopologicalInsulator}) where the observable algebra $\mathscr{A}$ is the Roe algebra of the underlying space.
In view of \eqref{KTheoryMapToRoeAlgebra} many non-trivial examples arise from periodic Hamiltonian on lattice systems.
Concretely, let $H = (H_{ij})_{i, j \in \Z}$, $H_{ij} \in M_n(\C)$, be a lattice-periodic Hamiltonian on $\ell^2(\Z^d) \otimes \C^n$; in other words, $H \in C^*_r(\Z^d) \otimes M_n(\C)$.

The 2-dimensional case is of particular interest, relating to the quantum Hall effect; particular model Hamiltonian have been given by Haldane \cite[Eq.\ (1)]{Haldane}; see also Kane and Mele \cite[Eq.\ (6)]{KaneMele}.
A model Hamiltonian for a 2-dimensional gyroscopic metamaterial has been introduced in \cite[Eq.\ (2)]{PNAS}.

\begin{example}
A simple toy Hamiltonian that works in any even dimension was suggested in \cite[Eq.\ (2.24)]{ProdanSchulzBaldes}.
It acts on $\ell^2(\Z^d) \otimes \C^{d/2}$ and is given by
%\begin{widetext}
\begin{small}
\begin{equation*}
H = \frac{1}{2i} \sum_{j=1}^d (S_j - S_j^*) \otimes \gamma_j + \Biggl( m + \sum_{j=1}^d (S_j + S_j^*)\Biggr) \otimes \gamma_0,
\end{equation*}
\end{small}where $S_j$ denotes the shift in the $j$-th coordinate direction and $\gamma_1, \dots, \gamma_d$ are Clifford algebra generators and $\gamma_0 = i^{d/2}\gamma_1 \cdots \gamma_d$.
If $m \notin \{-n, -n+2, \dots, n-2, n\}$, this Hamiltonian has a spectral gap at zero and hence defines a class $[p_H] \in K_0(C^*_r(\Z^d))$ and, by \eqref{MapToRoeAlgebra}, in $C^*_{\Roe}(\Z^d)$.
One can show that this class is non-trivial if $|m| < n$ \cite{JosephMeyer}.
\end{example}

We now give examples for topological insulators in continuum models, where the Hilbert space is $\mathscr{H} = L^2(X)$ for a 2-dimensional Riemannian manifold $X$, with observable algebra $\mathcal{A} = C^*_{\Roe}(X)$.
An interesting class of examples is the Landau Hamiltonian $H_{\mathrm{Lan}}$.
This is by definition the magnetic Schr\"odinger operator \eqref{MagneticSchroedinger} for a magnetic potential $A$ such that 
$dA = b \cdot \mathrm{vol}_X$ for some $b \in \R \setminus \{0\}$, where $\mathrm{vol}_X$ is the Riemannian volume form of $X$.

\begin{example}
\label{ExampleEuclideanLandau}
In the case of $X = \R^2$ with the flat metric, one can show that the spectrum of the Landau Hamiltonian $H_{\mathrm{Lan}}$ is quantized, given by
\begin{equation}
\label{LandauSpectrum}
\mathrm{spec}(H_{\mathrm{Lan}}) = \{(2n-1)|b| \mid n=1, 2, 3, \dots\}.
\end{equation}
Here each of the eigenvalues (called \emph{Landau levels}) is infinitely degenerate. 
One can show that the spectral projection onto each eigenspace of $H_{\mathrm{Lan}}$ is a generator of $K_0(C^*_{\Roe}(\R^2))$ {\normalfont \cite[\S2.3 \& Thm.~3]{LudewigThiangGaplessness}}.
\end{example}

\begin{example}
\label{ExampleHyperbolidLandau}
For $X = \mathbb{H}^2$, the hyperbolic space, the Landau Hamiltonian $H_{\mathrm{Lan}}$ has the finite set of eigenvalues
\begin{equation*}
  (2n-1)|b|-n(n-1), \quad n=1, \dots, n_{\max} < |b| + \frac{1}{2},
\end{equation*}
each of infinite multiplicity, as well as continuous spectrum $[\frac{1}{4} + b^2, \infty)$, see \cite{ComtetHouston}.
One can show that similar to \eqref{KtheoryRoeAlgebraRn}, we have $K_0(C^*_{\Roe}(\mathbb{H}^2)) \cong \Z$ and that the spectral projection onto each eigenspace of $H_{\mathrm{Lan}}$ is a generator \cite[Lemma~3.1 \& Thm.~3]{LudewigThiangGaplessness}.
%
%The Euclidean Landau Hamiltonian from Example~\ref{ExampleEuclideanLandau} is (twisted) invariant under lattice translations and may be understood using Bloch theory, respectively the group $C^*$-algebra of $\Z^d$.
%This is no longer possible for the Landau Hamiltonian on hyperbolic space; here the Roe algebra is a natural home.
\end{example}

\begin{example}
Example~\ref{ExampleEuclideanLandau} can be generalized to the situation where $X$ is a complete Riemannian surface with not necessarily constant scalar curvature.
We may consider a magnetic Hamiltonian \eqref{MagneticSchroedinger} with magnetic potential $A$ satisfying $dA = \theta \cdot \mathrm{vol}_X$ for some smooth function $\theta$ such that $\theta - b$ vanishes at infinity, where $b \neq 0$ is a constant.
Then under suitable assumptions on $X$ (non-vanishing of the \emph{coarse index of the Dirac operator}; see the article ``Coarse geometry''), the essential spectrum of $H$ is again the the discrete set \eqref{LandauSpectrum}, and for each essential eigenvalue, the corresponding eigenprojection is a generator for $K_0(C^*_{\Roe}(X))$ \cite[Prop.~3.20]{KubotaBulkDislocation}.
\end{example}

\medskip

\section{Bulk-Boundary correspondence}

So far, we only discussed the ``bulk'' of the material, i.e., its idealized, boundaryless version.
We now turn to investigating the behavior of a topological insulator at the boundary.

Let $X$ be a metric space (which we view as the ``bulk'' of our material), and let $\mathscr{H}$ be an ample $X$-module.
Let moreover $H$ be a Hamiltonian on $\mathscr{H}$ satisfying the assumptions of \S\ref{SectionSetup} with respect to $\mathcal{A} = C^*_{\Roe}(X)$, i.e., we assume that $f(H)$ is contained in the Roe algebra of $X$ for each $f \in C_c(\R)$.

For a subset $Y \subset X$, denote by 
\begin{equation}
\label{CoarseBoundary}
\partial Y := \{Y\} \cap \{X \setminus Y\}
\end{equation}
 the \emph{coarse boundary} of $Y$, a big family on $X$.
We then have an exact 6-term sequence of K-theory groups
\begin{widetext}
\begin{equation}
\label{Mayer-Vietoris}
\begin{tikzcd}
  K_0(C^*_{\Roe}(\partial Y)) \ar[r] & K_0(C^*_{\Roe}(\{Y\})) \oplus K_0(C^*_{\Roe}(\{X \setminus Y\})) \ar[r] & K_0(C^*_{\Roe}(X))\ar[d, "\partial"] \\
  K_1(C^*_{\Roe}(X)) \ar[u, "\partial"] & K_1(C^*_{\Roe}(\{Y\})) \oplus K_1(C^*_{\Roe}(\{X\setminus Y\})) \ar[l] & K_1(C^*_{\Roe}(\partial Y)) \ar[l]
\end{tikzcd}
\end{equation}
\end{widetext}
called the \emph{coarse Meyer-Vietoris sequence} \cite{HigsonRoeYu}.

\begin{remark}
The statement here is slightly more general then the statement obtained in \S5 of \cite{HigsonRoeYu}, which requires the notion of \emph{coarse transversality}, see \cite[\S3.4]{EwertMeyer}.
The result in the form stated above is essentially Prop.~8.82 of \cite{BunkeEngelHomotopy}.
It follows from Lemma~1 of \cite{HigsonRoeYu} (see also Prop.~3.6 of \cite{EwertMeyer}), by observing that 
\begin{equation*}
C^*_{\Roe}(\{Y\}) \cap C^*_{\Roe}(\{X \setminus Y\}) = C^*_{\Roe}(\partial Y), 
\end{equation*}
by \eqref{IntersectionOfFamilyAlgebras}.
\end{remark}

\begin{definition}
We say that for a subset $Y \subseteq X$, the \emph{bulk-boundary correspondence holds} if the Mayer-Vietoris boundary map 
\begin{equation}
\label{MayerVietorisBoundaryMap}
\partial : K_0(C^*_{\Roe}(X)) \longrightarrow K_1(C^*_{\Roe}(\partial Y))
\end{equation}
is an isomorphism.
\end{definition}

Validity of the bulk-boundary correspondence implies that non-triviality of a K-theory class determined by the Fermi projection $p_E$ of a Hamiltonian $H$ that is insulating at $E$ can be detected at the (coarse) boundary.
Actually, this is already true if the map \eqref{MayerVietorisBoundaryMap} injective.

\begin{remark}
  In the physical literature, a topological insulator is often \emph{defined} as a phase with particular properties of the boundary. 
 From the mathematical point of view, this is not necessarily equivalent to the definition via non-triviality in the bulk: The topological non-triviality may be detected at the boundary if the bulk-boundary correspondence holds, but this may fail in general.
   \end{remark}

\begin{theorem}
\label{TheoremFlasqueBulkBoundary}
  If $Y$ and $X \setminus Y$ are both \emph{flasque}, then bulk-boundary correspondence holds.
\end{theorem}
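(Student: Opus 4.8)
The plan is to substitute the flasqueness hypothesis into the coarse Mayer--Vietoris sequence \eqref{Mayer-Vietoris} and watch it collapse. The only genuinely coarse-geometric ingredient I need is the standard fact that \emph{the Roe algebra of a flasque space has trivial $K$-theory}: if $Z$ is flasque, then $K_*(C^*_{\Roe}(Z)) = 0$ (see \cite[\S6.4]{WillettYu}). Granting this, the remainder of the argument is pure homological algebra.

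First I would translate the hypothesis into statements about the localized algebras appearing in \eqref{Mayer-Vietoris}. By the identification recalled earlier, $K_*(C^*_{\Roe}(\{Y\})) \cong K_*(C^*_{\Roe}(Y))$ and $K_*(C^*_{\Roe}(\{X \setminus Y\})) \cong K_*(C^*_{\Roe}(X \setminus Y))$, the right-hand groups being computed intrinsically from the subspace coarse structures. Since $Y$ and $X \setminus Y$ are flasque in their own right, all four groups $K_i(C^*_{\Roe}(\{Y\}))$ and $K_i(C^*_{\Roe}(\{X \setminus Y\}))$, $i = 0, 1$, vanish.

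Next I would insert these zeros into \eqref{Mayer-Vietoris}. Both direct-sum terms become $0$, so the six-term sequence degenerates into the two short pieces $0 \to K_0(C^*_{\Roe}(X)) \xrightarrow{\partial} K_1(C^*_{\Roe}(\partial Y)) \to 0$ and $0 \to K_1(C^*_{\Roe}(X)) \xrightarrow{\partial} K_0(C^*_{\Roe}(\partial Y)) \to 0$. Exactness at the four flanking zeros makes each connecting map $\partial$ simultaneously injective and surjective; in particular the map \eqref{MayerVietorisBoundaryMap} is an isomorphism, which is precisely the assertion of the theorem.

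The real content, and the step I expect to be the main obstacle to write cleanly, is the input $K_*(C^*_{\Roe}(Z)) = 0$ for flasque $Z$. This is where the definition of flasqueness enters, through an Eilenberg swindle. Flasqueness supplies a controlled self-map $\phi \colon Z \to Z$ that is close to the identity and whose iterates push every bounded set off to infinity; covering $\phi$ by an isometry $V$ on the $Z$-module yields conjugation $\alpha = \mathrm{Ad}_V$, and the infinite sum $\Phi = \sum_{n \geq 0} \mathrm{Ad}_{V^n}$ satisfies the recursion $\Phi = \mathrm{id} \oplus (\alpha \circ \Phi)$. The three defining conditions of flasqueness are exactly what guarantee that this infinite sum is again locally compact and of finite propagation, hence lands in $C^*_{\Roe}(Z)$. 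Since $\phi$ is close to the identity, $\alpha$ induces the identity on $K$-theory, so the recursion gives $\Phi_* = \mathrm{id}_* + \Phi_*$, forcing $\mathrm{id}_* = 0$ and hence $K_*(C^*_{\Roe}(Z)) = 0$. The only subtlety is bookkeeping: one must check that the swindle is carried out inside the Roe algebra rather than merely among ambient bounded operators, but this is routine given the flasque structure and may simply be cited.
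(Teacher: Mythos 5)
Your proposal is correct and follows essentially the same route as the paper: the vanishing of $K_*(C^*_{\Roe}(Z))$ for flasque $Z$, combined with the identification $K_*(C^*_{\Roe}(\{Y\})) \cong K_*(C^*_{\Roe}(Y))$, fed into exactness of the Mayer--Vietoris sequence \eqref{Mayer-Vietoris}. The paper simply cites \cite[Prop.~9.4]{RoeIndexTheory} for the vanishing result, whereas you additionally sketch the Eilenberg swindle behind it; that extra detail is sound but could equally be cited, exactly as you suggest.
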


See \cite[Def.~9.3]{RoeIndexTheory} for the general definition of the flasqueness property.
In particular, any half space in $\R^d$ is flasque, as well as any subspace coarsely equivalent to a half space. 
Hence this result applies to a wide class of examples.

\begin{proof}[Proof of Thm.~\ref{TheoremFlasqueBulkBoundary}]
The Roe algebra K-theory of flasque spaces vanishes \cite[Prop.~9.4]{RoeIndexTheory}, so the result follows from exactness of \eqref{Mayer-Vietoris}.
\end{proof}

\section{Gap-filling}

In this section, we discuss a coarse geometric \emph{gap-filling result}, which says that for a topological insulator, the spectral gap of the bulk Hamiltonian closes upon introducing boundary.

Let $H$ be a Hamiltonian acting on an ample $X$-module $\mathscr{H}$ which is compatible with the Roe algebra $C^*_{\Roe}(X)$ defined using $\mathscr{H}$.

\begin{definition}
\label{DefinitionAdaptedOperator}
Let $Y \subset X$ be a subspace.
A \emph{Hamiltonian on $Y$ adapted to} $H$ is a self-adjoint (possibly unbounded) operator on $\mathscr{H}$ such that for all $f \in C_0(\R)$, we have $f(\tilde{H}) \in C^*_{\Roe}(\{Y\})$ and $f(H) - f(\tilde{H}) \in C^*_{\Roe}(\{X \setminus Y\})$.
\end{definition}

\begin{example}
\label{ExampleContraction}
If $H$ is bounded, so that $H \in C^*_{\Roe}(X)$, then the contraction $\tilde{H} = \chi_Y H \chi_Y$ of $H$ to $Y$ satisfies these requirements. 
\end{example}

\begin{example}
If $X$ is a complete Riemannian manifold, $\mathscr{H} = L^2(X)$, $Y \subset X$ an open subset and $H$ a magnetic Schr\"odinger operator as in \eqref{MagneticSchroedinger}, we can take $\tilde{H}$ to be the restriction of $H$ to $Y$ with Dirichlet boundary conditions (i.e., the closure of the essentially self-adjoint operator on $L^2(Y)$ with domain $C^\infty_c(Y)$, extended by zero to $L^2(X)$); see \cite[\S1.4]{LudewigThiangGaplessness}.
More general boundary conditions can be treated as well; see ibid., Remark 1.8.
\end{example}

\begin{theorem}
\label{ThmGapFilling}
Let $Y\subseteq X$ and let $\tilde{H}$ be a Hamiltonian on $Y$ adapted to $H$.
Suppose that $H$ is insulating at some energy level $E$ and let $[p_E] \in K_0(C^*_{\Roe}(X))$ be the class of the corresponding fermi projection.
Then if 
\begin{equation*}
\partial([p_E]) \neq 0 \in K_1(C^*_{\Roe}(\partial Y)), 
\end{equation*}
we have $E \in \spec(\tilde{H})$.
\end{theorem}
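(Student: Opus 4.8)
The plan is to prove the contrapositive: assuming $E \notin \spec(\tilde H)$, I will show that the boundary class $\partial([p_E])$ vanishes. The starting observation is that, since $H$ is positive with a spectral gap at $E$ and (as in the adapted Hamiltonians of the preceding examples, e.g.\ the contraction $\tilde H = \chi_Y H \chi_Y$ of Example~\ref{ExampleContraction}) $\tilde H$ is bounded below with a gap at $E$, one may choose a single function $f \in C_c(\R)$ that equals $1$ on $\spec(H) \cap (-\infty, E)$ and on $\spec(\tilde H) \cap (-\infty, E)$, and equals $0$ on $[E, \infty)$. With this choice $f(H) = p_E$, while $\tilde p_E := f(\tilde H)$ is precisely the Fermi projection of $\tilde H$ at $E$. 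Now the defining properties of an adapted Hamiltonian (Definition~\ref{DefinitionAdaptedOperator}) enter: they give $\tilde p_E = f(\tilde H) \in C^*_{\Roe}(\{Y\})$ and $p_E - \tilde p_E = f(H) - f(\tilde H) \in C^*_{\Roe}(\{X \setminus Y\})$. Writing $I := C^*_{\Roe}(\{Y\})$ and $J := C^*_{\Roe}(\{X \setminus Y\})$, the upshot is a \emph{projection} $\tilde p_E \in I$ with $p_E \equiv \tilde p_E \pmod J$.

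Next I would unwind the Mayer--Vietoris boundary map. Recall from the remark following \eqref{Mayer-Vietoris} that $I \cap J = C^*_{\Roe}(\partial Y)$, so the coarse Mayer--Vietoris sequence \eqref{Mayer-Vietoris} is the one associated with the ideals $I, J$ in $A := C^*_{\Roe}(X)$ satisfying $I + J = A$. The key structural input is the standard fact that such a sequence is obtained by splicing the six-term sequences of the extensions $0 \to I \cap J \to I \to I/(I\cap J) \to 0$ and $0 \to J \to A \to A/J \to 0$ along the isomorphism $I/(I\cap J) \cong A/J$ induced by $I + J = A$. Under this splicing the boundary map factors, up to sign, as
\begin{equation*}
\partial \;=\; \partial_{\exp} \circ q_{J*} : K_0(A) \longrightarrow K_0(A/J) \cong K_0\bigl(I/(I\cap J)\bigr) \longrightarrow K_1(I \cap J),
\end{equation*}
where $q_J : A \to A/J$ is the quotient map and $\partial_{\exp}$ is the exponential map of the extension $0 \to I\cap J \to I \to I/(I\cap J) \to 0$.

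With this factorization in hand the conclusion is immediate. Since $p_E \equiv \tilde p_E \pmod J$, the quotient yields $q_{J*}[p_E] = [\tilde p_E \bmod J]$, which under $A/J \cong I/(I\cap J)$ is the image of the projection $\tilde p_E \in I$ along $I \to I/(I\cap J)$. Because $\tilde p_E$ is a projection lifting $q_J(p_E)$, the exponential map returns $\partial_{\exp}([\tilde p_E \bmod J]) = -[\exp(2\pi i\, \tilde p_E)] = -[1] = 0$; equivalently, by exactness of the six-term sequence of the extension $0 \to I\cap J \to I \to I/(I\cap J) \to 0$, the class $q_{J*}[p_E]$ lies in the image of $K_0(I) \to K_0(I/(I\cap J))$ and hence in the kernel of $\partial_{\exp}$. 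Therefore $\partial([p_E]) = 0$, contradicting the hypothesis, and so $E \in \spec(\tilde H)$.

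I expect the main obstacle to be the precise identification of the Mayer--Vietoris boundary map with $\partial_{\exp} \circ q_{J*}$: making this rigorous requires tracing the splicing isomorphism through the construction of the coarse Mayer--Vietoris sequence underlying \eqref{Mayer-Vietoris} and checking that it is compatible with the functional-calculus decomposition $p_E = \tilde p_E + (p_E - \tilde p_E)$. A secondary technical point is the choice of $f$: one must ensure $f(\tilde H)$ is genuinely the Fermi projection of $\tilde H$, which uses that $\tilde H$ is bounded below with a gap at $E$, and this is automatic for the contraction and Dirichlet-type adapted Hamiltonians considered above.
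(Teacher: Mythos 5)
Your proposal is correct and follows essentially the same route as the paper's proof: contrapositive, a common continuous function $f$ with $f(H)=p_E$ and $f(\tilde H)$ a projection in $C^*_{\Roe}(\{Y\})$, the congruence $p_E \equiv f(\tilde H)$ modulo $C^*_{\Roe}(\{X\setminus Y\})$ from the adaptedness condition, and the factorization of the Mayer--Vietoris boundary map as the exponential map of the extension $0 \to C^*_{\Roe}(\partial Y) \to C^*_{\Roe}(\{Y\}) \to C^*_{\Roe}(\{Y\})/C^*_{\Roe}(\partial Y) \to 0$ precomposed with the quotient (``restriction'') map, so that the projection lift $f(\tilde H)$ kills the class. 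The factorization you flag as the main obstacle is exactly the point the paper delegates to the proof of Prop.~3.6 of \cite{EwertMeyer}; also note the paper only needs $f(\tilde H)$ to be \emph{some} projection (requiring $f \in \{0,1\}$ on $\spec(\tilde H)\cup\spec(H)$), not the Fermi projection of $\tilde H$, which sidesteps your secondary concern about $\tilde H$ being bounded below.
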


We recall that elements of the K-theory group $K_1(C^*_{\Roe}(\partial Y))$ are homotopy classes of unitaries in matrix algebras over the unitalization $C^*_{\Roe}(\partial Y)^+$.

\begin{proof}
Suppose that $E \notin \spec(\tilde{H})$, in other words $E$ is in the resolvent set of both $H$ and $\tilde{H}$.
Then since the resolvent set of an operator is open, we can write $p_E = f(H)$ for some $f \in C_c(\R)$ such that $f(\lambda) \in \{0, 1\}$ whenever $\lambda \in \mathrm{spec}(\tilde{H}) \cup \mathrm{spec}(H)$.
Then $\tilde{p}_E := f(\tilde{H}) \in C^*_{\Roe}(\{Y\})$ is also a projection.
\\
By the isomorphism
\begin{equation*}
C^*_{\Roe}(\{Y\}) / C^*_{\Roe}(\partial Y) \cong C^*_{\Roe}(X) / C^*_{\Roe}(\{X \setminus Y\}), 
\end{equation*}
we see that the quotient algebra in the short exact sequence
\begin{equation}
\label{SESpartialY}
  C^*_{\Roe}(\partial Y) \longrightarrow C^*_{\Roe}(\{Y\}) \stackrel{\pi}{\longrightarrow} C^*_{\Roe}(\{Y\}) / C^*_{\Roe}(\partial Y).
\end{equation}
receives a ``restriction'' homomorphism 
\begin{equation}
\label{Homomorphismr}
  r: C^*_{\Roe}(X) \longrightarrow C^*_{\Roe}(\{Y\}) / C^*_{\Roe}(\partial Y).
\end{equation}
As $\tilde{H}$ is adapted to $H$, we have $\pi(\tilde{p}_E) = r(p_E)$ in $C^*_{\Roe}(\{Y\}) / C^*_{\Roe}(\partial Y)$.
We consider the six-term sequence in K-theory of this short exact sequence.
Recall that generally, the boundary map
\begin{equation*}
\mathrm{Exp} : K_0(C^*_{\Roe}(\{Y\}) / C^*_{\Roe}(\partial Y)) \to K_1(C^*_{\Roe}(\partial Y))
\end{equation*}
of such a six-term sequence is given by
\begin{equation}
\label{FormulaForExponentialMap}
\mathrm{Exp}([q]) = [\exp(2 \pi i \tilde{q})],
\end{equation}
where $\tilde{q} \in C^*_{\Roe}(\{Y\})$ is a self-adjoint lift of the projection $q$ (see \cite[9.3.2]{Blackadar}).
In particular, $\mathrm{Exp}([q]) = 0$ if a lift $\tilde{q}$ can be found that is a projection.
We therefore conclude that $\mathrm{Exp}([r(p_E)]) = 0$, since $\tilde{p}_E$ is a projection that lifts $r(p_E) = \pi(\tilde{p}_E)$.
On the other hand, the Meyer-Vietoris boundary map is precisely the composition
\begin{equation*}
\begin{tikzcd}
K_0(C^*_{\Roe}(X)) \ar[r ,"\pi_*"] \ar[ddr, bend right=25, "\partial"'] \ar[dr, "r"]& K_0(C^*_{\Roe}(X) / C^*_{\Roe}(\{X \setminus Y\})) \ar[d, equal]\\
& K_0(C^*_{\Roe}(\{Y\}) / C^*_{\Roe}(\partial Y)) \ar[d, "\mathrm{Exp}"] \\
& K_1(C^*_{\Roe}(\partial Y)),
\end{tikzcd}
\end{equation*}
see the proof of \cite[Prop.~3.6]{EwertMeyer}.
We conclude that $\partial([p_E]) = \mathrm{Exp}([r(p_E)]) = 0$.
\end{proof}

\begin{example}
The K-theoretic exponential map has also been used to show gap-filling results in \cite{ThiangEdgeFollowing,LudewigThiangCobordism,KubotaLudewigThiang,LudewigThiangGaplessness}.
To the author's knowledge, this kind of argument was first used in \cite{SchubaKellendonkRichterSimultaneous}.
\end{example}

\begin{example}
Let $Y \subset \R^2$ be a half space (or any open subset coarsely equivalent to a half space) and consider $H = H_{\mathrm{Lan}}$, the Landau Hamiltonian; see \S\ref{SectionExamples}.
Then if $E$ lies between the $n$-th and the $(n+1)$-st Landau level, then $[p_E] \in K_0(C^*_{\Roe}(X)) \cong \Z$ is $n$ times a generator, and by Thm.~\ref{TheoremFlasqueBulkBoundary}, bulk boundary correspondence holds, so that $\partial([p_E]) \neq 0$.
We obtain that the Landau Hamiltonian on $Y$ (with, say Dirichlet boundary conditions) has no spectral gaps.
\end{example}

\begin{example}
Using Thm.~\ref{ThmGapFilling}, one can also show that the Landau Hamiltonian on coarse half spaces $Y$ of the hyperbolic plane $\mathbb{H}^2$ has no spectral gaps, even though such subspaces $Y$ are not flasque; see \cite{LudewigThiangGaplessness}.
\end{example}

Summarizing, it is a feature of the coarse geometric setup that one may obtain the above gap-filling results for a wide variety of subspaces $Y \subseteq X$, without having to delve into peculiarities of the involved $C^*$-algebras (as, e.g., in \cite{ThiangEdgeFollowing}).
Calculating the spectrum of $\tilde{H}$ explicitly is typically impossible for subspaces $Y$ other than the perfect half space.

%Let $A = C^*_{\Roe}(X)$, $I = C^*_{\Roe}(Y \subseteq X)$, $J = C^*_{\Roe}(\{Y^c\})$.
%Then
%\begin{equation*}
%\begin{aligned}
%  I \cap J &= I \cap \colim\limits_{Z \in \{Y^c\}} C^*_{\Roe}(Z \subseteq X) \\
%  &= \colim\limits_{Z \in \{Y^c\}} I \cap C^*_{\Roe}(Z \subseteq X) \\
%  &= \colim\limits_{Z \in \{Y^c\}} C^*_{\Roe}(Y \cap Z \subseteq X) \\
%  &= C^*_{\Roe}(\partial Y).
%  \end{aligned}
%\end{equation*}
%Here we use that the intersection is the limit of a diagram of shape $\bullet \rightarrow \bullet \leftarrow \bullet$ in the category $C^*$-algebras, which commutes with the colimit as the colimit is filtered by the definition of a big family.
%The result then follows then from Prop.~3.6 of \cite{EwertMeyer}.

%Consider $Y_1, Y_2 \subseteq X$ and set $Z := Y_1 \cap Y_2$.
%$Y_1$ and $Y_2$ are called \emph{coarsely transverse} if for each $R>0$, there exists $S>0$ such that $B_R(Y_1) \cap B_R(Y_2) \subseteq B_S(Z)$.
%$Y_1$ and $Y_2$ are a \emph{partition} of $X$ if $Y_1 \cup Y_2 = X$.
%For coarsely transverse partitions of $X$, we have the \emph{coarse Meyer-Vietoris sequence}
%\begin{widetext}
%\begin{equation*}
%\begin{tikzcd}
%  K_0(C^*_{\Roe}(Z\subset X)) \ar[r] & K_0(C^*_{\Roe}(Y_1 \subset X)) \oplus K_0(C^*_{\Roe}(Y_2\subset X)) \ar[r] & K_0(C^*_{\Roe}(X))\ar[d] \\
%  K_1(C^*_{\Roe}(X)) \ar[u] & K_1(C^*_{\Roe}(Y_1\subset X)) \oplus K_1(C^*_{\Roe}(Y_2\subset X)) \ar[l] & K_1(C^*_{\Roe}(Z\subset X)) \ar[l]
%\end{tikzcd}
%\end{equation*}
%\end{widetext}

\section{Edge-traveling}

In this section, we give a result on how to detect non-triviality of the class $\partial([p_E])$, as required in Thm.~\ref{ThmGapFilling} in two-dimensional topological insulators.
To this end, we want to construct a homomorphism $K_1(\partial Y) \to \Z$
that allows us to detect non-triviality of an abstract K-theory class.
As it turns out, this is closely related to edge-traveling phenomena.
The presentation roughly follows \cite{LudewigThiangCobordism}.

Let $X$ be a metric proper metric space and let $\mathscr{H}$ be an ample $X$-module. 

\begin{lemma}
\label{LemmaCommutator}
For any Borel subset $W \subseteq X$ and any $T \in C^*_{\Roe}(X)$, we have $[\chi_W, T] \in C^*_{\Roe}(\partial Y)$, where $\partial W$ is the coarse boundary of $W$, see \eqref{CoarseBoundary}.
\end{lemma}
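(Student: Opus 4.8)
The plan is to prove the statement first for $T$ in the dense $*$-subalgebra of locally compact finite-propagation operators and then extend by norm-continuity, exploiting the description of the coarse boundary $\partial W = \{W\} \cap \{X\setminus W\}$ as an intersection of big families. Writing $C := [\chi_W, T] = \chi_W T - T\chi_W$, the goal reduces to showing that $C$ is supported in the big family $\{W\}$ \emph{and} in $\{X\setminus W\}$; once both hold, the intersection formula \eqref{IntersectionOfFamilyAlgebras} gives $C \in C^*_{\Roe}(\{W\}) \cap C^*_{\Roe}(\{X\setminus W\}) = C^*_{\Roe}(\partial W)$, as desired.

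First I would take $T$ locally compact of finite propagation $r$, so that $\chi_V T \chi_{V'} = 0$ whenever $d(V,V') > r$. Since multiplication operators have propagation zero, $C$ is again of finite propagation, and for bounded $V$ the operator $\chi_V C = \chi_{V\cap W} T - \chi_V T \chi_W$ is compact because $T$ is locally compact; hence $C \in C^*_{\Roe}(X)$ to begin with. To see that $C$ is supported in $\{W\}$, I would take $Y = B_{2r}(W) \in \{W\}$ and let $f$ be any function supported in $X\setminus Y$. Then $\mathrm{supp}\,f$ lies at distance $> r$ from $W$, so $f\chi_W = 0$ by disjointness of supports, while $\chi_{\mathrm{supp}\,f}\,T\chi_W = 0$ and $\chi_W T \chi_{\mathrm{supp}\,f} = 0$ by the finite-propagation bound; combining these gives $fC = 0$ and $Cf = 0$, which is precisely the support condition. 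Writing $\chi_W = 1 - \chi_{X\setminus W}$ yields $C = -[\chi_{X\setminus W}, T]$, so the identical argument with $W$ replaced by $X\setminus W$ shows that $C$ is also supported in $\{X\setminus W\}$.

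With both support statements in hand, \eqref{IntersectionOfFamilyAlgebras} establishes the lemma for $T$ in the dense subalgebra. For general $T \in C^*_{\Roe}(X)$ I would choose locally compact finite-propagation operators $T_n \to T$ in norm; since $\|[\chi_W, T_n] - [\chi_W, T]\| \leq 2\|T_n - T\|$, the elements $[\chi_W, T_n] \in C^*_{\Roe}(\partial W)$ converge to $[\chi_W, T]$, and norm-closedness of $C^*_{\Roe}(\partial W)$ completes the proof. The step requiring the most care is the support computation: one must verify \emph{both} $fC = 0$ and $Cf = 0$, and check that the chosen uniform thickening $B_{2r}(W)$ genuinely belongs to the big family $\{W\}$. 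This is the only place the finite-propagation hypothesis enters, and it is what makes the commutator ``shrink'' to the coarse boundary; the precise (and harmless) value of the thickening constant is immaterial, since $\{W\}$ contains all uniform thickenings of $W$.
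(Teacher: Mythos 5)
Your proposal is correct and takes essentially the same approach as the paper: the paper's own proof also verifies the support statement for finite-propagation operators and then extends to all of $C^*_{\Roe}(X)$ by norm-continuity, simply declaring the first step ``straightforward to verify.'' Your computation with the thickening $B_{2r}(W)$, the symmetry $[\chi_W,T]=-[\chi_{X\setminus W},T]$, and the intersection formula \eqref{IntersectionOfFamilyAlgebras} supplies exactly the details the paper leaves out.
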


\begin{proof}
If $T \in C^*_{\Roe}(X)$ is an operator of finite propagation, it is straightforward to verify that $[\chi_W, T]$ is supported in $\partial W$.
The statement for general $T \in C^*_{\Roe}(X)$ follows by continuity.
\end{proof}

Suppose from now on that we are given two Borel subsets $Y, W \subset X$ that are \emph{transversal} in the sense that $\partial Y \cap \partial W$ is bounded, meaning that any $Z \in \partial Y \cap \partial W$ is bounded.

\begin{example}
\label{ExampleLinePartition}
The prototypical example here is $X = \R^2$ or $\Z^2$ and
\begin{equation}
\label{StandardHalfSpaces}
 Y = \{(x, y) \mid x \geq 0\}, \quad W = \{(x, y) \mid y \geq 0\}.
 \end{equation}
 More generally, we can take $Y$ and $W$ to be any set such that $\partial Y = \{L_1\}$, $\partial W = \{L_2\}$ for non-parallel affine lines $L_1, L_2 \subset \R^2$.
 \begin{equation*}
 \underbrace{
 \begin{tikzpicture}[scale=0.9, every node/.style={scale=0.9}]
\draw[thick, name path=A] (0,0) .. controls (0,2) and (1,2.5) .. (0.5,4);
\draw[white, name path =B] (3,0) -- (3,4);
\draw[thick, name path=C] (-2,2) .. controls (-1,1.5) and (1,2.5) .. (3,2); 
\draw[white, name path =D] (-2,4) -- (3,4);
\tikzfillbetween[of=A and B]{blue, opacity=0.4}
\tikzfillbetween[of=D and C]{green, opacity=0.15}
\filldraw[thick, white, draw=black] (1,1) circle (0.2cm);
\node at (-1.5,3) {$W$};
\node at (2.5,0.4) {$Y$};
%\node at (-1.2,0.6) {$X$};
 \end{tikzpicture}
 }_X
 \end{equation*}
 \end{example}

\begin{lemma}
For any unitary $u$ in $C^*_{\Roe}(\partial Y)^+ \otimes M_n(\C)$, the operator
\begin{equation*}
F_u := \chi_W u  \chi_W + \chi_{X \setminus W}
\end{equation*}
is a Fredholm operator on $\mathscr{H}$.
%Here $C^*_{\Roe}(\partial Y)^+$ denotes the unitalization of $C^*_{\Roe}(\partial Y)$.
\end{lemma}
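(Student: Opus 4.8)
The plan is to exhibit $F_{u^*} := \chi_W u^* \chi_W + \chi_{X\setminus W}$ as a parametrix for $F_u$ and to show that the two resulting error terms are compact, the compactness being precisely where the transversality hypothesis enters via the commutator Lemma~\ref{LemmaCommutator} and the intersection formula \eqref{IntersectionOfFamilyAlgebras}. Throughout I would write $P = \chi_W$ and $Q = \chi_{X\setminus W} = 1 - P$ (tacitly $\chi_W \otimes 1$, etc., as operators on $\mathscr{H}\otimes\C^n$), so that $P^2 = P$, $Q^2 = Q$, $PQ = QP = 0$ and $F_u = PuP + Q$.

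First I would compute the two products directly. Using $P+Q = 1$ and $uu^* = u^*u = 1$, a short manipulation gives
\[
F_u F_{u^*} = PuPu^*P + Q = 1 - PuQu^*P, \qquad F_{u^*}F_u = 1 - Pu^*QuP .
\]
Hence $F_u$ is Fredholm, with parametrix $F_{u^*}$, as soon as the errors $PuQu^*P$ and $Pu^*QuP$ are compact. The key step is to rewrite these errors through commutators: setting $c := [\chi_W, u]$ and using $Pu = uP + c$ together with $PQ = QP = 0$, one finds $PuQ = cQ$ and $Qu^*P = c^*P$, so that $PuQu^*P = cc^*P$ and, symmetrically (replacing $u$ by $u^*$, whose commutator with $\chi_W$ is $-c^*$), $Pu^*QuP = c^*cP$. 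It therefore suffices to prove that $c$ is compact.

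To show $c$ is compact I would write $u = \lambda + a$ with $\lambda \in M_n(\C)$ the scalar part and $a \in C^*_{\Roe}(\partial Y)\otimes M_n(\C)$; since $\lambda$ commutes with $\chi_W$, we have $c = [\chi_W, a]$. Now two facts about $c$ combine. On the one hand, $\chi_W$ is a multiplier of $C^*_{\Roe}(X)$ that preserves the support condition defining the localized algebra, so $\chi_W a$ and $a\chi_W$ remain supported in $\partial Y$ and hence $c \in C^*_{\Roe}(\partial Y)\otimes M_n(\C)$. On the other hand, $a \in C^*_{\Roe}(X)\otimes M_n(\C)$, so Lemma~\ref{LemmaCommutator} gives $c \in C^*_{\Roe}(\partial W)\otimes M_n(\C)$. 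By the intersection formula \eqref{IntersectionOfFamilyAlgebras}, $c \in C^*_{\Roe}(\partial Y \cap \partial W)\otimes M_n(\C)$.

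Finally I would invoke transversality. By hypothesis every member of the big family $\partial Y \cap \partial W$ is bounded, so any operator supported there is locally compact with bounded support and is therefore compact; consequently $C^*_{\Roe}(\partial Y\cap\partial W)$ consists of compact operators. Thus $c$ is compact, whence so are $cc^*P$ and $c^*cP$, and $F_uF_{u^*} - 1$ and $F_{u^*}F_u - 1$ are compact, proving $F_u$ Fredholm. I expect the only delicate points to be bookkeeping: verifying carefully that $\chi_W$ is a genuine multiplier of the \emph{localized} algebra $C^*_{\Roe}(\partial Y)$ and preserves its support condition, so that $c$ lands in $C^*_{\Roe}(\partial Y)$ and not merely in $C^*_{\Roe}(X)$, and confirming that a bounded big family produces only compact operators. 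Both reduce to unwinding the definitions, but they are exactly where the transversality assumption does its work.
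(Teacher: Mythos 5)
Your proof is correct and follows essentially the same route as the paper: the parametrix $F_{u^*}$, the identification of the error terms with commutators $[\chi_W,u]$, and the combination of Lemma~\ref{LemmaCommutator} with the intersection formula \eqref{IntersectionOfFamilyAlgebras} and transversality to conclude compactness. Your version merely spells out more explicitly the points the paper compresses (the splitting $u = \lambda + a$ handling the unitalization, and the verification that both products $F_uF_{u^*}$ and $F_{u^*}F_u$ are compact perturbations of the identity).
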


\begin{proof}
We have 
\begin{equation}
\label{FFinverse}
F_u F_{u^*} = 1 + \chi_W u[\chi_W, u^*]\chi_W.
\end{equation}
By Lemma~\ref{LemmaCommutator} (which clearly also holds for $T$ in the unitalization $C^*_{\Roe}(X)^+$, as well as matrix algebras over it), the commutator $[\chi_W, u^*]$ is contained in $C^*_{\Roe}(\partial Y \cap \partial W)$.
But since $\partial Y \cap \partial W$ is bounded, we have $C^*_{\Roe}(\partial Y \cap \partial W) = K(\mathscr{H})$ the algebra of compact operators on $\mathscr{H}$.
Equation \eqref{FFinverse} therefore shows that $F_u$ is invertible up to a compact perturbation, hence Fredholm.
\end{proof}

It is now easy to check that the map sending a unitary $u \in C^*_{\Roe}(\partial Y)^+ \otimes M_n(\C)$ to the index of the Fredholm operator $F_u$ depends only on the equivalence class of $u$ in $K_1(C^*_{\Roe}(\partial Y))$ and yields a well-defined group homomorphism
\begin{equation*}
\begin{aligned}
  \theta_W : K_1(C^*_{\Roe}(\partial Y)) \to \Z, \qquad
      [u]  \mapsto \mathrm{index}(F_u).
  \end{aligned}
\end{equation*}
In other words, $\theta_W$ gives a possibility to ``measure'' the non-triviality of abstract $K_1$-classes.
One can show that if one replaces $W$ by $W^\prime$ with the property that both
\begin{equation*}
  \partial Y \cap \{W\} \cap \{X \setminus W^\prime\} \quad \text{and} \quad \partial Y \cap \{W^\prime\} \cap \{X \setminus W\}
\end{equation*}
are bounded, then $\theta_W = \theta_{W^\prime}$.
This property of $\theta_W$ is called \emph{cobordism invariance} in \cite[\S4]{LudewigThiangCobordism}.

\begin{example}
For the choice of $Y$ and $W$ from Example~\ref{ExampleLinePartition}, one can check that the map $\theta_W$ is an isomorphism \cite[\S5.2]{LudewigThiangCobordism}.
In particular, this implies that
\begin{equation}
\label{IsoK1Z}
K_1(C^*_{\Roe}(\partial Y) \cong K_1(C^*_{\Roe}(\R)) \cong \Z
\end{equation}
which can be abstractly checked using the Mayer-Vietoris sequence \eqref{Mayer-Vietoris}, by partitioning $\R$ into two rays, whose Roe algebra K-theory vanishes by flasqueness.
Combining the isomorphism \eqref{IsoK1Z} with the fact that the Mayer-Vietoris boundary map is an isomorphism in this case (Thm.~\ref{TheoremFlasqueBulkBoundary}), we obtain the isomorphism \begin{equation}
\label{IsoRoe}
K_0\bigl(C^*_{\Roe}(\R^2)\bigr) \cong \Z
\end{equation} 
from \eqref{KtheoryRoeAlgebraRn}.
\end{example}

\begin{remark}
The subsets $Y$ and $W$ from Example~\ref{ExampleLinePartition} give a \emph{multi-partition} of $\R^2$ in the sense of \cite{SchickEsfahani}.
The isomorphism from \eqref{IsoRoe} is then a simple example for the isomorphism of Thm.~1.4 ibid.
\end{remark}

A version of the following lemma can be found in the reference \cite[Lemma~6.7]{LudewigThiangCobordism}.

\begin{lemma}
\label{LemmaCalderon}
Let $u \in C^*_{\Roe}(\partial Y)^+ \otimes M_n(\C)$ be a unitary.
Then we have the ``Kubo formula''
\begin{equation}
\label{CalderonFormula}
  \theta_W([u]) = \mathrm{Tr}\bigl(u[\chi_W, u^*]\bigr),
\end{equation}
provided that the operator on the right hand side is trace-class.
\end{lemma}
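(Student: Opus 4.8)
The plan is to compute $\theta_W([u]) = \mathrm{index}(F_u)$ through a Fedosov-type trace formula and then reconcile the answer with the right-hand side of \eqref{CalderonFormula}. Throughout I abbreviate $P = \chi_W$ and $Q = \chi_{X\setminus W} = 1 - P$. First I would record the parametrix identities already implicit in the proof of the previous lemma: the computation behind \eqref{FFinverse}, together with its mirror image, gives
\begin{equation*}
1 - F_u F_{u^*} = PuQu^*P, \qquad 1 - F_{u^*}F_u = Pu^*QuP .
\end{equation*}
Thus $F_{u^*}$ is a parametrix for $F_u$, and it remains to upgrade the two remainders from compact to trace-class.

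The key algebraic observation is that
\begin{equation*}
PuQ = Pu(1-P) = (Pu - uP)(1-P) = [P,u]\,Q, \qquad QuP = -Q\,[P,u],
\end{equation*}
together with the analogous identities for $u^*$. Since $u$ is unitary, the hypothesis that $u[\chi_W,u^*]$ be trace-class is equivalent to $[\chi_W,u^*] = [P,u^*]$ being trace-class, and its adjoint $-[P,u]$ is then trace-class as well. Consequently each remainder factors as a trace-class operator times a bounded one, e.g. $PuQu^*P = [P,u]\,Q\,(Qu^*P) = -[P,u]\,Q\,[P,u^*]$, and is therefore trace-class. Fedosov's index formula then applies and yields
\begin{equation*}
\mathrm{index}(F_u) = \Tr\!\left(1 - F_{u^*}F_u\right) - \Tr\!\left(1 - F_u F_{u^*}\right) = \Tr(Pu^*QuP) - \Tr(PuQu^*P).
\end{equation*}

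Finally I would reconcile this with $\Tr(u[\chi_W,u^*])$. Expanding, $u[P,u^*] = uPu^* - P = Q - uQu^*$, which is trace-class by assumption; decomposing its trace along $1 = P+Q$ and using that the off-diagonal blocks $PBQ,\,QBP$ of a trace-class operator $B$ have vanishing trace (by cyclicity, since $QP = PQ = 0$), one is left with $\Tr(PBP) + \Tr(QBQ)$. A short computation using $PuPu^*P = P - PuQu^*P$ gives $PBP = -PuQu^*P$ and $QBQ = QuPu^*Q$, and cyclicity of the trace applied to the product of the two Hilbert--Schmidt operators $QuP$ and $Pu^*Q$ identifies $\Tr(QuPu^*Q) = \Tr(Pu^*QuP)$. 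Substituting these matches the Fedosov expression exactly, giving \eqref{CalderonFormula}. The main subtlety throughout is that neither $\chi_W$ nor $u$ is individually trace-class, so every invocation of cyclicity and of the block decomposition must be routed through the genuinely trace-class combinations supplied by $PuQ = [P,u]Q$; keeping track of this, and of the signs in the Fedosov formula, is where the real care is required.
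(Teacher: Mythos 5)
Your proof is correct and takes essentially the same route as the paper: the paper's ``Calderon's formula'' $\mathrm{ind}(F_u) = \Tr(F_uF_{u^*} - F_{u^*}F_u)$ is exactly the Fedosov formula you invoke (note $F_{u^*} = F_u^*$ since $\chi_W$ is self-adjoint), and both arguments then reduce to the same cyclicity manipulations after establishing that the remainders are trace-class. The only cosmetic difference is bookkeeping: the paper splits $[\chi_W, u^*]$ into its two corner pieces $\chi_W u^*(1-\chi_W)$ and $(1-\chi_W)u^*\chi_W$ and shows each is individually trace-class in order to pull the trace apart, whereas you decompose $\Tr\bigl(u[\chi_W,u^*]\bigr)$ into its diagonal blocks $\Tr(PBP)+\Tr(QBQ)$ --- equivalent computations leading to the same matching of terms.
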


\begin{proof}

With a view on \eqref{FFinverse}, the trace-class assumption on $u[\chi_W, u^*]$ implies that $F_u F_u^* - 1$ is trace-class.
By Calderon's formula (see, e.g., \cite[Lemma~4.1]{RoeOpen}), we therefore have 
\begin{equation*}
\begin{aligned}
&\mathrm{ind}(F_u) = \Tr(F_uF_{u^*}  - F_{u^*}F_u)\\
&= \Tr(\chi_W u \chi_W u^* \chi_W - \chi_W u^* \chi_W u \chi_W) \\
&=  -\Tr\bigl(\chi_W u (1-\chi_W) u^* - \chi_W u^* (1-\chi_W) u\bigr),
\end{aligned}
\end{equation*}
using cyclicity of the trace.
To continue, observe that
\begin{equation*}
  [\chi_W, u^*] = \chi_Wu^*(1-\chi_W) - (1-\chi_W)u^*\chi_W.
\end{equation*}
Because the composition of the two terms on the right hand side in any order is zero, they must be individually trace-class, as $u[\chi_W, u]$ (and hence also $[\chi_W, u^*]$) is trace-class.
We may therefore pull apart the trace above, to obtain
\begin{equation*}
\begin{aligned}
& -\Tr\bigl(\chi_W u (1-\chi_W) u^*\bigr) + \underbrace{\Tr\bigl(\chi_W u^* (1-\chi_W) u\bigr)}_{=\Tr((1-\chi_W) u\chi_W u^*)}\\
&\qquad= -\Tr\bigl(\chi_W u (1-\chi_W) u^* - (1-\chi_W) u\chi_W u^* \bigr)\\
&\qquad= \Tr(u[\chi_W, u^*]).
\end{aligned}
\end{equation*}
Here the underbraced identity follows again from cyclicity of the trace.
\end{proof}

We now give a physical interpretation of the functional $\theta_W$.
We treat the case that the Hamiltonian is bounded.
Hence let $H \in C^*_{\Roe}(X)$ be an insulator at energy level $E$ (Definition~\ref{DefinitionInsulator}) and let $p_E$ be the corresponding Fermi projection.
Let $\tilde{H} \in C^*_{\Roe}(\{Y\})$ be any Hamiltonian on $Y$ adapted to $H$ (Definition~\ref{DefinitionAdaptedOperator}).

\begin{theorem}
\label{ThmCurrentChannel}
Let $H$ be insulating at energy level $E$ with Fermi projection $p_E$.
Let $(a, b) \ni E$ be an open interval disjoint from the spectrum of $H$ and let $\varphi$ be a continuous function with integral one and compact support in $(a, b)$.
Then
\begin{equation*}
  \theta_W\bigl(\partial([p_E])\bigr) = - 2\pi \cdot \mathrm{Tr} \bigl(\varphi(\tilde{H}) \cdot i[\tilde{H}, \chi_W]\bigr),
\end{equation*}
provided that the trace on the right hand side exists.
Here $\partial$ is the Mayer-Vietoris boundary map \eqref{MayerVietorisBoundaryMap}. 
\end{theorem}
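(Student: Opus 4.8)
The plan is to combine the description of the Mayer--Vietoris boundary map as an exponential map (from the proof of Theorem~\ref{ThmGapFilling}) with the Kubo formula of Lemma~\ref{LemmaCalderon}. First I would identify the class $\partial([p_E])$ concretely. Since $(a,b)$ is disjoint from $\spec(H)$ and $H$ is positive, I would choose $f \in C_c(\R)$ with $f \equiv 1$ on $\spec(H)\cap(-\infty,a]$ and $f\equiv 0$ on $\spec(H)\cap[b,\infty)$, decreasing monotonically across $(a,b)$, so that $p_E = f(H)$ and $-f' = \varphi$; note $\int_a^b(-f') = f(a)-f(b) = 1$, matching the normalization of $\varphi$. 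Because $\tilde H$ is adapted to $H$, the element $f(\tilde H)\in C^*_{\Roe}(\{Y\})$ is a self-adjoint lift of the projection $r(p_E)$ in the quotient $C^*_{\Roe}(\{Y\})/C^*_{\Roe}(\partial Y)$, exactly as in the proof of Theorem~\ref{ThmGapFilling}. The exponential-map formula \eqref{FormulaForExponentialMap} together with the factorization $\partial = \Exp\circ r$ then gives
\[
\partial([p_E]) = \Exp([r(p_E)]) = [u], \qquad u := \exp\bigl(2\pi i\, f(\tilde H)\bigr),
\]
and one checks $u\in C^*_{\Roe}(\partial Y)^+$ since $\pi(u) = \exp(2\pi i\, r(p_E)) = 1$, using that $r(p_E)$ is a projection.

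Next I would apply Lemma~\ref{LemmaCalderon} to this unitary, which reduces the left-hand side to $\theta_W([u]) = \Tr(u[\chi_W,u^*])$, valid once this operator is trace-class. Writing $u_s := \exp(2\pi i s\, f(\tilde H))$ and using $u[\chi_W,u^*] = u\chi_W u^* - \chi_W = \int_0^1 \tfrac{d}{ds}(u_s\chi_W u_s^*)\,ds$, a direct differentiation (the unitaries $u_s$ commute with $f(\tilde H)$) yields
\[
u[\chi_W,u^*] = 2\pi i\int_0^1 u_s\,[f(\tilde H),\chi_W]\,u_s^*\,ds.
\]
Taking the trace and invoking cyclicity to discard the conjugation by $u_s$ formally gives $\Tr(u[\chi_W,u^*]) = 2\pi i\,\Tr([f(\tilde H),\chi_W])$. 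Finally, a DuHamel/Fourier representation of $[f(\tilde H),\chi_W]$ in terms of $[\tilde H,\chi_W]$, followed once more by cyclicity, converts this into $\Tr(f'(\tilde H)[\tilde H,\chi_W])$; substituting $f' = -\varphi$ produces
\[
\theta_W(\partial([p_E])) = -2\pi i\,\Tr\bigl(\varphi(\tilde H)[\tilde H,\chi_W]\bigr) = -2\pi\,\Tr\bigl(\varphi(\tilde H)\cdot i[\tilde H,\chi_W]\bigr),
\]
as claimed. For bounded $\tilde H$ the identity $\Tr([f(\tilde H),\chi_W]) = \Tr(f'(\tilde H)[\tilde H,\chi_W])$ can alternatively be obtained by approximating $f$ uniformly, together with $f'$, by polynomials on $\spec(\tilde H)$ and applying cyclicity term by term.

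The main obstacle is the trace-class bookkeeping, because none of the intermediate commutators need be trace-class on their own. By Lemma~\ref{LemmaCommutator} the commutator $[f(\tilde H),\chi_W]$ lies in $C^*_{\Roe}(\{Y\}\cap\partial W)$, and $\{Y\}\cap\partial W$ is typically unbounded, a ray in the prototypical Example~\ref{ExampleLinePartition}, so $[f(\tilde H),\chi_W]$ is generally only compact; likewise $[\tilde H,\chi_W]$. Only the specific combination $\varphi(\tilde H)[\tilde H,\chi_W]$ is assumed trace-class. The delicate point is therefore to justify pulling the trace inside the $s$-integral and applying cyclicity despite the integrand failing to be trace-class pointwise. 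I expect this to be handled by inserting the DuHamel representation \emph{before} taking any trace, so that the factor $f'(\tilde H) = -\varphi(\tilde H)$ rendering the product trace-class is present from the outset, and only then interchanging trace and integration; an approximation or regularization argument controlling the resulting conditional trace should supply the rigorous justification, and the same identity shows that the trace-class hypothesis on the right-hand side is exactly what is needed to legitimize the application of Lemma~\ref{LemmaCalderon}.
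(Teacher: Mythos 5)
Your proposal is correct and follows essentially the same route as the paper's proof: the same choice of $f$ with $p_E = f(H)$ and $f' = -\varphi$, the identification $\partial([p_E]) = [\exp(2\pi i\, f(\tilde H))]$ via $\Exp \circ r$ and \eqref{FormulaForExponentialMap}, reduction to Lemma~\ref{LemmaCalderon}, the DuHamel step giving $\Tr(u[\chi_W,u^*]) = -2\pi i\,\Tr([\chi_W, f(\tilde H)])$, and the Fourier/DuHamel step converting this to $\Tr(f'(\tilde H)[\tilde H,\chi_W])$, with the converse trace-class implication handled by running the computation backwards. The only wrinkle is that your conditions ``$f \in C_c(\R)$'' and ``$-f'=\varphi$ globally'' are incompatible; the paper resolves this by requiring $f'=-\varphi$ only for $\lambda \geq \lambda_0$, a lower bound on the spectra of both $H$ and $\tilde H$, which is all that the final substitution needs.
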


Similar formula can be found in \cite{ElbauGraf,KellendonkRichterSchulzBaldes}; see also \cite[Prop.~7.1.2]{ProdanSchulzBaldes}.
\S6 of \cite{LudewigThiangCobordism} treats the case where $H$ is a magnetic Schr\"odinger operator \eqref{MagneticSchroedinger}, where one obtains a similar formula.
Below we give a rather simple proof of Thm.~\ref{ThmCurrentChannel}.

\begin{interpretation}
When considering the boundary states of $\tilde{H}$ with energies lying in $(a, b)$, the term $i[\tilde{H}, \chi_W]$ is the time-derivative of the observable $\chi_W$ of being in $W$, by Heisenberg's equation of motion. 
We interpret $\varphi(\tilde{H})$ as a statistical ensemble of generalized eigenstates of $\tilde{H}$ with energies within $(a, b)$.
So $\Tr(\varphi(\tilde{H}) \cdot i[\tilde{H}, \chi_W])$ is the expected rate of change of probability to be inside $W$, within the statistical ensemble $\varphi(\tilde{H})$ of boundary localized states. 
Because this expectation value is $(2\pi)^{-1}$ times some Fredholm index, we deduce, a posteriori, that
the $(a, b)$-filling boundary states of $\tilde{H}$ constitute a quantized current channel flowing between $W$ and its complement.
\end{interpretation}

\begin{proof}[Proof of Thm.~\ref{ThmCurrentChannel}]
Let $\lambda_0$ be a lower bound on the spectrum of both $H$ and $\tilde{H}$ and let $f \in C^1_c(\R)$ such that $f^\prime(\lambda) = - \varphi(\lambda)$ for all $\lambda \geq \lambda_0$.
Then by choice of $\varphi$, we have $p_E = f(H)$. 
Recall from the proof of Thm.~\ref{ThmGapFilling} that $\partial([p_E]) = \Exp(r([p_E]))$, where $r$ is the restriction homomorphism \eqref{Homomorphismr} and $\Exp$ is the boundary map for the short exact sequence \eqref{SESpartialY}.
By the explicit formula \eqref{FormulaForExponentialMap} for $\Exp$ and the fact that $\tilde{H}$ is adapted, we therefore have 
\begin{equation*}
\partial([p_E]) = \partial([f(H)]) = \Exp(r([f(H)])) = [u],
\end{equation*}
 where
\begin{equation}
\label{Definitionu}
   u = \exp\bigl(2 \pi i \, f(\tilde{H})\bigr).
\end{equation}
With a view on Lemma~\ref{LemmaCalderon}, we are now done if we can show that $u[\chi_W, \tilde{u}]$ is trace-class if and only if $2\pi \cdot \varphi(\tilde{H})\cdot i [\chi_W, \tilde{H}]$ is, and that they have the same trace if they are trace-class.
If $[\chi_W, \tilde{H}]$ would commute with $\tilde{H}$, then this would just follow from the chain rule (as taking the commutator with $\chi_W$ is a derivation on the algebra of bounded operators on $\mathscr{H}$).
To establish this in general, we have to be careful.

Suppose that $u[\chi_W, u^*]$ is trace-class, so that \eqref{CalderonFormula} holds.
In a first step, with a view on \eqref{Definitionu}, we use the formula
\begin{equation}
\label{DerivativeOfExp}
  [\chi_W, e^a] = \int_0^1 e^{(1-t)a} [\chi_W, a] e^{ta} dt
\end{equation}
with $a = -2 \pi i\, f(\tilde{H})$, as well as cyclicity of the trace to obtain that 
\begin{equation*}
  \Tr\bigl(u[\chi_W, u^*]\bigr) = - 2 \pi i \cdot \Tr\bigl([\chi_W, f(\tilde{H})]\bigr).
\end{equation*}
Next, we use the identity
\begin{equation}
\label{FouriertransformIdentity}
f(\tilde{H}) = \frac{1}{2\pi} \int_{-\infty}^\infty \widehat{f}(s) e^{i s \tilde{H}} ds,
\end{equation}
valid for any Schwartz function of $\tilde{H}$, where $\hat{f}$ is the Fourier transform of $f$.
 By  \eqref{DerivativeOfExp} again, this time with $a=is \tilde{H}$, we get
\begin{equation*}
\begin{aligned}
 &[\chi_W, f(\tilde{H})] = \frac{1}{2\pi} \int_{-\infty}^\infty \hat{f}(s) [\chi_W, e^{i s \tilde{H}}] ds\\
 &~~= \frac{1}{2\pi} \int_{-\infty}^\infty\int_0^1 \widehat{f^\prime}(s)  e^{is(1-t) \tilde{H}}[\chi_W, \tilde{H}] e^{ist\tilde{H}} dt ds,
\end{aligned}
\end{equation*}
where we employed the identity $is\widehat{f}(s) = \widehat{f^\prime}(s)$.
Taking the trace and using its cyclicity, the $t$-integral collapses and we can use \eqref{FouriertransformIdentity} backwards (for $f^\prime$ instead of $f$) to obtain the desired identity, keeping in mind that $f^\prime = -\varphi$ on the spectrum of $\tilde{H}$.

Conversely, tracing back the previous calculations, we see that the trace-class property of $\varphi(\tilde{H})[\tilde{H}, \chi_W]$ implies that of $u[\chi_W, u^*]$.
\end{proof}

Observe that since $\varphi$ is supported in the spectral gap of $H$, we have $\varphi(H) = 0$. 
The fact that $\tilde{H}$ is adapted to $H$ therefore implies that $\varphi(\tilde{H}) \in C^*_{\Roe}(\partial Y)$.
On the other hand, by Lemma~\ref{LemmaCommutator}, $[\tilde{H}, \chi_W] \in C^*_{\Roe}(\partial W)$.
So $\varphi(\tilde{H}) \cdot i[\tilde{H}, \chi_W]$ is supported on $\partial Y \cap \partial W$ which is bounded by the transversality assumption on $Y$ and $W$; hence $\varphi(\tilde{H}) \cdot i[\tilde{H}, \chi_W]$ is a compact operator.

We now give a sufficient criterion for this operator to be even trace-class, so that Thm.~\ref{ThmCurrentChannel} applies.
Suppose that $X = \Z^d$ with the $X$-module $\mathscr{H} = \ell^2(\Z^d) \otimes \C^n$ for some coefficient Hilbert space $\mathscr{K}$.
Let $H = (H_{ij})_{i, j \in \Z^d}$, be a Hamiltonian with rapidly decaying coefficients, meaning that for each $\mu \geq 0$, one has
\begin{equation}
\label{RapidDecay}
 \sup_{i, j \in \Z^d} \|H_{ij}\| \cdot d(i, j)^\mu < \infty.
\end{equation} 
for all $i, j \in \Z^d$.
Let $\tilde{H} = \chi_Y H \chi_Y$ be the adapted Hamiltonian on $Y$ from Example~\ref{ExampleContraction}.
We assume that $Y$ and $W$ are the standard half spaces \eqref{StandardHalfSpaces}.

\begin{theorem}
Under the above assumptions, the operator $\varphi(\tilde{H}) \cdot i[\tilde{H}, \chi_W]$ from Thm.~\ref{ThmCurrentChannel} is trace-class.
\end{theorem}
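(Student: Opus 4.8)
The plan is to reduce the claim to matrix-entry estimates on the two factors and then to sum these over the lattice. Write $A=\varphi(\tilde H)$ and $B=i[\tilde H,\chi_W]$, so that $T=AB$. Since the blocks are $n\times n$ (because $H_{ij}\in M_n(\C)$), one has $\|T_{ik}\|_{\mathrm{tr}}\le n\,\|T_{ik}\|$, and by the triangle inequality $\|T\|_{\mathrm{tr}}\le n\sum_{i,k\in\Z^2}\|T_{ik}\|$. Hence it suffices to show that for every $N$ there is a constant $C_N$ with
\begin{equation*}
\|T_{ik}\|\le C_N\,(1+|i|)^{-N}(1+|i-k|)^{-N},\qquad i,k\in\Z^2,
\end{equation*}
as the right-hand side is summable over $\Z^2\times\Z^2$ once $N>2$. (I take $\varphi\in C_c^\infty$; only finitely many derivatives enter the estimates, so the general continuous case follows by the same bounds or a routine approximation.) Throughout I use that $B=\chi_Y B\,\chi_Y$, so only the part of $A$ supported in $Y$ contributes to the product.

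First I would record the elementary estimate on $B$. Since $\chi_W$ commutes with $\chi_Y$, one computes $B=i\,\chi_Y[H,\chi_W]\chi_Y$, whence $B_{jk}=i\,\chi_Y(j)\chi_Y(k)\bigl(\chi_W(k)-\chi_W(j)\bigr)H_{jk}$. Thus $B_{jk}$ vanishes unless $j,k\in Y$ and the two points straddle the line $\{y=0\}$, and by the rapid-decay hypothesis \eqref{RapidDecay} we have $\|B_{jk}\|\le C_\mu(1+|j-k|)^{-\mu}$ for every $\mu$. Because straddling forces $|j_2|,|k_2|\le|j-k|$, the factor $B$ is concentrated near $\{y=0\}$ with rapid transverse and off-diagonal decay.

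The heart of the argument, and the main obstacle, is the corresponding estimate for $A=\varphi(\tilde H)$: I claim that for every $N$,
\begin{equation*}
\|A_{ij}\|\le C_N\,(1+|i-j|)^{-N}\bigl(1+\mathrm{dist}(i,\partial Y)\bigr)^{-N},
\end{equation*}
expressing that $A$ has rapidly decaying entries \emph{and} is concentrated near the edge $\partial Y\simeq\{x=0\}$. The off-diagonal decay is the standard fact that operators of rapid decay form a spectrally invariant smooth subalgebra, so that $\tilde H=\chi_Y H\chi_Y$, whose entries inherit \eqref{RapidDecay}, has $\varphi(\tilde H)$ again of rapid decay. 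The edge-concentration is the genuinely new point and is a locality statement of Combes--Thomas type: writing $A=\varphi(\tilde H)-\varphi(H)$ using $\varphi(H)=0$, inserting the Fourier representation \eqref{FouriertransformIdentity} and Duhamel's formula
\begin{equation*}
e^{is\tilde H}-e^{isH}=i\int_0^s e^{i(s-\tau)\tilde H}\,(\tilde H-H)\,e^{i\tau H}\,d\tau,
\end{equation*}
one observes that $\tilde H-H$ is supported within finite distance of $\{x<0\}$ with rapid decay, while the propagators $e^{it\tilde H},e^{itH}$ have entries decaying rapidly in $d(p,q)$ once $d(p,q)\gtrsim|t|$ (a finite-propagation-speed, Lieb--Robinson-type bound valid because of \eqref{RapidDecay}). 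Hence connecting a point $i$ deep in $Y$ to the support of $\tilde H-H$ forces propagation over a distance $\gtrsim\mathrm{dist}(i,\partial Y)$, which is rapidly small for small $|s|$; integrating against the Schwartz function $\hat\varphi$ then yields the claimed decay in $\mathrm{dist}(i,\partial Y)$. I expect this finite-propagation estimate to be the most delicate step.

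Finally I would combine the two bounds in $T_{ik}=i\sum_j A_{ij}B_{jk}$. Using $\mathrm{dist}(i,\partial Y)=i_1$ for $i\in Y$ and, from the straddling support of $B$, the inequality $|i_2|\le|i-j|+|j_2|\le|i-j|+|j-k|$, together with $1+|i_2|\le(1+|i-j|)(1+|j-k|)$, one extracts factors $(1+i_1)^{-N}$ and $(1+|i_2|)^{-M}$ and is left with the lattice convolution
\begin{equation*}
\sum_{j\in\Z^2}(1+|i-j|)^{-(N-M)}(1+|j-k|)^{-(N-M)}\le C\,(1+|i-k|)^{-(N-M)},
\end{equation*}
valid for $N-M>2$. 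Since $i_1\ge0$ gives $(1+i_1)^{-L}(1+|i_2|)^{-L}\le(1+|i|)^{-L}$, choosing $N,M$ and $N-M$ all large yields $\|T_{ik}\|\le C_R(1+|i|)^{-R}(1+|i-k|)^{-R}$ for every $R$, which is the desired bound. Summing over $i,k$ then shows that $T$ is trace-class.
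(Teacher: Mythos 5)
Your proposal follows essentially the same route as the paper's proof. Both arguments rest on three points: (i) $\varphi(\tilde H)$ has rapidly decaying matrix entries \emph{and} is rapidly concentrated near $\partial Y$ (this is where $\varphi(H)=0$, i.e.\ the support of $\varphi$ in the spectral gap of $H$, enters); (ii) $i[\tilde H,\chi_W]$ is rapidly concentrated near $\partial W$; (iii) the product therefore has entries decaying rapidly away from the bounded set $\partial Y\cap\partial W$ and in the off-diagonal direction, and summability of the entry norms gives the trace-class property. The difference is one of granularity: the paper cites the smooth-functional-calculus closure of the rapid decay algebra and simply \emph{asserts} the edge-concentration \eqref{DecayAwayFromZ} and the final summation, whereas you supply actual arguments -- the explicit entry formula for $i[\tilde H,\chi_W]$, the convolution estimates, and, for the edge concentration, a dynamical mechanism via $\varphi(\tilde H)=\varphi(\tilde H)-\varphi(H)$, the Fourier representation \eqref{FouriertransformIdentity} and Duhamel's formula, which is in the spirit of the paper's own proof of Thm.~\ref{ThmCurrentChannel}.

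The one step that needs repair is the propagation estimate you invoke, and it matters because \eqref{RapidDecay} gives superpolynomial but not exponential decay of the hopping. The naive expansion bound is $\|(e^{itH})_{pq}\|\le e^{|t|A_\mu}(1+d(p,q))^{-\mu}$ with $A_\mu=\sup_p\sum_q\|H_{pq}\|(1+d(p,q))^\mu$, and the exponential factor in $t$ cannot be integrated against the merely Schwartz weight $\hat\varphi$; splitting the $s$-integral at $|s|\sim\log d$ leaves a tail decaying only like a power of $\log d$, which destroys rapid decay. A genuinely linear light cone with rapid decay outside it (``rapidly small once $d(p,q)\gtrsim|t|$''), as you state it, is both harder than you can get this way and more than you need. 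What suffices, and is elementary, is the polynomial-in-time bound: for every $N$ there is $C_N$ with
\begin{equation*}
\|(e^{itH})_{pq}\|\le C_N\,(1+|t|)^N\,(1+d(p,q))^{-N},
\end{equation*}
proved by induction on $N$ from the identity
\begin{equation*}
w^N e^{itH}=e^{itH}w^N+i\int_0^t e^{i(t-s)H}\,[w^N,H]\,e^{isH}\,ds,
\end{equation*}
with $w(\cdot)=1+d(\cdot,q)$ applied to $\delta_q$; the same bound holds for $\tilde H$, whose entries also satisfy \eqref{RapidDecay}. Inserting this into your Duhamel/Fourier step, the $(1+|s|)^N$ prefactors are absorbed by the Schwartz decay of $\hat\varphi$, and your convolution bookkeeping goes through verbatim. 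Finally, the reduction of continuous $\varphi$ to smooth $\varphi$ should be done by factoring $\varphi=\varphi\psi$ with $\psi\in C^\infty_c((a,b))$, $\psi\equiv 1$ on the support of $\varphi$, so that $\varphi(\tilde H)\,i[\tilde H,\chi_W]=\varphi(\tilde H)\bigl(\psi(\tilde H)\,i[\tilde H,\chi_W]\bigr)$ is bounded times trace-class; uniform approximation alone does not preserve the trace class.
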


\begin{proof}
Let $\mathscr{C}_{\mathrm{rd}}(\Z^d) \subset C^*_{\Roe}(\Z^d)$ be the \emph{rapid decay algebra} consisting of operators whose matrix coefficients satisfy the decay property \eqref{RapidDecay}.
As observed in \cite[\S2.16]{EwertMeyer}, $\mathscr{C}_{\mathrm{rd}}(\Z^d)$ is closed under smooth functional calculus for normal elements.
In particular, $\varphi(\tilde{H}) \in \mathscr{C}_{\mathrm{rd}}(\Z^d)$.
Using that $\varphi$ is supported in the spectral gap of $H$, one can moreoever show that $\varphi(\tilde{H})$ decays rapidly away from $\partial Y$, meaning that for each $Z \in \partial Y$ and each $\mu \geq 0$, one has
\begin{equation}
\label{DecayAwayFromZ}
 \sup_{i, j \in \Z^d} \|\varphi(\tilde{H})_{ij}\| \cdot d(i, Z)^\mu\cdot  d(j, Z)^\mu < \infty .
\end{equation}
Similarly, $[\chi_W, \tilde{H}]$ decays rapidly away from $\partial W$.
The product of these operators decays rapidly away from $\partial Y \cap \partial W$, which implies that the product is trace-class.
\end{proof}

Clearly, the above still holds if $Y$ and $W$ are modified in a bounded way, as this does not change the decay conditions \eqref{RapidDecay}, \eqref{DecayAwayFromZ}.

\vspace{1.8cm}

%To close this section, consider the case where $H$ is a magnetic Schr\"odinger operator as in \eqref{MagneticSchroedinger} on $X = \R^2$.
%Choose a cutoff parameter $\lambda_0 \gg E$ and a smooth function $g$ such that $g(\lambda) = \lambda$ for $\lambda < \lambda_0$ and $g(\lambda) \equiv \lambda_0$ for $\lambda \geq \lambda_0 + 1$.
%Then set $H^\prime = f(H)$, which is now a bounded operator with the same spectrum as $H$ below $\lambda_0$.
%Let $\tilde{H}$ be the same operator acting on functions on $Y$ with, say, Dirichlet boundary conditions.
%Again, set $\tilde{H}^\prime = f(\tilde{H})$.
%In this situation, one can show that $\varphi(\tilde{H}^\prime) \cdot i[\tilde{H}^\prime, \chi_W]$ is trace-class, and the formula from Thm.~\ref{ThmCurrentChannel} holds with $\tilde{H}$ replaced by $\tilde{H}^\prime$ \cite[Thm.~6.1]{LudewigThiangCobordism}.
%This may be proven in exactly the same way, even though $\tilde{H}^\prime$ is not contained in the Roe algebra $C^*_{\Roe}(\{Y\})$ but only in the larger algebra $D^*_{\Roe}(\{Y\})$; see {\color{red}[...]}.

\bibliography{Literatur}

\end{document}